\newcommand\ie{\emph{i.e., }}
\def\t{\color{black}\ } %yyy
\newtheorem{assumption}{Assumption}
\newtheorem{lemma}{Lemma}
\newtheorem{problem}{Problem}
\newtheorem{corollary}{Corollary}
\newtheorem{proposition}{Proposition}
\newcommand{\maria}[1]{\textcolor{black}{#1}}
\title{\LARGE \bf Utility of the Koopman operator in output regulation of disturbed nonlinear systems}
\author{Bart Kieboom, Maria Bartzioka and Matin Jafarian
 \thanks{The authors are with the Delft Center for Systems and Control (DCSC), Delft University of Technology, The Netherlands. Email: {\tt\small b.kieboom@student.tudelft.nl; m.bartzioka@tudelft.nl; m.jafarian@tudelft.nl.} The work of M. Jafarian is supported by the EU-MCIF project ReWoMeN.}
}
\begin{document}
\maketitle
\thispagestyle{empty}
\pagestyle{empty}
%%%%%%%%%%%%%%%%%%%%%%%%%%%%%%%%%%%%%%%%%%%%%%%%%%%%%%%%%%%%%%%%%%%%%%%%%%%%%%%%
\begin{abstract}
This paper studies the problem of output regulation for a class of nonlinear systems experiencing matched input disturbances. It is assumed that the disturbance signal is generated by an external autonomous dynamical system. First, we show that for a class of nonlinear systems admitting a finite-dimensional Koopman representation, the problem is equivalent to a bilinear output regulation. We then prove that a linear dynamic output feedback controller, inspired by the linear output regulation framework, locally solves the original nonlinear problem. Numerical results validate our analysis.  %Subsequently, using a Lyapunov-based stability analysis, a set of initial conditions is characterized under which output regulation is achieved. 
\end{abstract}
%%%%%%%%%%%%%%%%%%%%%%%%%%%%%%%%%%%%%%%%%%%%%%%%%%%%%%%%%%%%%%%%%%%%%%%%%%%%%%%%
\section{Introduction}
Output regulation is a well-known problem in control theory. The goal is that the system's output asymptotically tracks a reference signal and/or rejects a disturbance signal generated by an external autonomous dynamical system, namely the {\em exosystem}. The problem is well-studied for linear systems \cite{huang2004}, where necessary and sufficient conditions for its solvability depend on the solvability of {\em regulatory equations} \cite{francis1977}. The {\em internal model principle} has been proven essential in solving these equations and designing controllers \cite{francis1975}. Output regulation of nonlinear systems, however, is naturally more challenging. This problem has also been widely studied \cite{isidori1997}. It has been shown that a set of partial differential equations form the regulatory equations for the nonlinear problem \cite{huang2004,isidori1997}. However, finding solutions for the PDEs and designing appropriate internal-model-based controllers for general nonlinear systems and exosystems is quite complex. The problem has been solved for several specific classes of nonlinear systems, and particular control problems \cite{huang2004,serrani2000,byrnes2003}. Owing to its importance and applicability, the nonlinear output regulation problem is still an active field of research. 
In this paper, we propose utilizing the Koopman operator in solving a sub-class of nonlinear output regulation problems. 

The Koopman operator provides an alternative way to model (nonlinear) dynamical systems. It was originally introduced by Bernard Koopman in \cite{koopman1931} and popularized for the study of dynamical systems in \cite{mezic2005,alex2020}. Numerous applications in a wide range of research fields have explored the employment of these techniques \cite{korda2018,brunton2016}. The Koopman operator, associated with a state-space description of a nonlinear dynamical system, is a linear and infinite-dimensional operator that acts on functions of the state of the system, often called observables or observable functions. The action of the Koopman operator on such observables allows us to compute the time evolution of the observables linearly, according to the flow of the system. 
%\marianew{The action of the Koopman operator on such observables allows us to compute the time evolution of observables linearly, according to the flow of the system. Furthermore, if we assume that the states of the underlying nonlinear system are observables themselves, all the trajectories of the aforementioned nonlinear system can be computed linearly, in a finite way.}
%Since projections to the states of the underlying nonlinear system are observable functions themselves, full knowledge of the Koopman operator allows computing trajectories of the system linearly. 

We study matched input disturbance rejection of nonlinear systems, assuming that disturbance signals are generated by linear exosystems. For a class of nonlinear systems that admit a finite-dimensional Koopman representation, we show that the nonlinear problem can be equivalently represented as a bilinear one. Then, we design a linear output feedback controller inspired by the linear output regulation framework to achieve regional bilinear output regulation. We characterize the latter as achieving regional stabilization for the bilinear undisturbed system together with regulating the output error to zero. 
%Using the Lyapunov-based stability analysis of quadratic systems, \cite{khlebnikov2016}, we identify initial conditions ensuring output regulation for the bilinear system. 
Since the bilinear system equivalently represents the nonlinear system, we conclude that regional bilinear output regulation guarantees that the original nonlinear output regulation problem is locally solved. %The Koopman operator is used in order to describe an equivalent bilinear formulation for a class of nonlinear disturbed systems. 
%This paper solves the regional bilinear matched input disturbance rejection problem using an internal-model-based linear output feedback controller. 

%there exists a set of initial conditions such that the designed controller regulates the output of the original nonlinear system. 

%{We study matched input disturbance rejection of nonlinear systems, assuming that the disturbance is generated by a linear exo-system. We use the Koopman operator framework to represent the nonlinear problem as an equivalent bilinear one. For a class of nonlinear systems which admit a finite-dimensional Koopman representation, we design a controller for regulation of their equivalent bilinear system, following the approach of the linear output regulation problem. We employ the Lyapunov-based stability analysis of quadratic systems, introduced by \cite{khlebnikov2016}, to characterize a set of initial conditions for which output regulation of the bilinear system is guaranteed. Since the bilinear and nonlinear system are equivalent, it follows that a controller that regulates the output of the bilinear system also regulates the output of the equivalent nonlinear system.}

To the best of our knowledge, the existing literature has been mainly focused on bilinear stabilization, both model-based \cite{khlebnikov2016} and data-driven \cite{amato2009,bisoffi2020}. The bilinear output regulation problem, though, has been less explored. Constant disturbance rejection of bilinear systems has been tackled in \cite{isidori1979}. Exploring the applications of the Koopman operator, it has been employed to derive bilinear formulation for nonlinear stabilization problems, in both model-based and data-driven approaches, mainly in the discrete-time setting \cite{korda2018,goswami2017,huang2022}, where the problems have been formulated in an optimal control setting. Compared to the aforementioned works, we solve the continuous-time bilinear matched input disturbance rejection problem, with a general linear exosystem, within the output regulation framework and design a linear controller inspired by the internal model principle. Our approach enables a systematic way for local output regulation of the nonlinear systems.

This paper is organized as follows. Section \ref{section:preliminaries} presents preliminaries on the output regulation problem, Koopman operator theory and the problem formulation. Section \ref{section:norp Koopman} covers the main results of the paper. Section \ref{section:simulation} presents a numerical example. The paper is concluded in section \ref{section:conclusion}.
%==============================================================================================================================================================================
\section{Preliminaries and Problem Formulation}
\label{section:preliminaries}
In this section, we first recall some required techniques from the output regulation \cite{huang2004} and the Koopman framework \cite{alex2020,mezic2005,brunton2016}, and continue by formulating the problem.
\subsection{Linear Output Regulation}
Consider the following disturbed linear system,
\begin{subequations}
\label{eq:lorp:LORP}
\begin{align}
    \dot{x} &= Ax + Bu + Pv, \label{eq:lorp:LORP a}\\
    y       &= Cx,           \label{eq:lorp:LORP c}
\end{align}
\end{subequations}
with $x\in X\subseteq \mathbb{R}^n$ the state, $u\in\mathbb{R}^m$ the input, $y\in\mathbb{R}^l$ the output error, and $v\in \mathbb{R}^q$ the output of the following linear autonomous system, 
%\begin{subequations}
\begin{equation}
\label{eq:pf:exosys plus output}
%\begin{align}
    \dot{w} = Sw, \quad
    v = Ew,
%\end{align}
\end{equation}
%\end{subequations}
with $w\in W\subseteq \mathbb{R}^r$ the exogenous disturbance signal.

%{\m{The partial information formulation was chosen for the description of the disturbed linear system \eqref{eq:lorp:LORP} as it depicts a more general case. The (special) full information formulation is obtained when the output matrix has non-zero elements.}}

The goal is to design a controller such that the undisturbed system has an asymptotically stable equilibrium, and the output error of the system in the presence of the disturbance satisfies 
%\begin{equation}
%\label{eq:norp:goal}
    $\lim_{t\to\infty} y(t) = 0,$
%\end{equation}
for any initial condition of the plant and the exosystem.
%Equation \eqref{eq:lorp:LORP} and \eqref{eq:pf:exosys plus output} are referred to as the plant and exosystem, respectively. 
The exosystem is assumed neutrally stable, \ie $S^T = -S$. %Such an exosystem can always be constructed from parallel interconnections of constant and sinusoidal exosystems. The control law is consisted of two control objectives, addressed by a double-dimensional state array: $\xi=\begin{bmatrix}
%    \xi_1 & \xi_2
%\end{bmatrix}^T$ where the first entry stabilizes the disconnected system while the second deals %with disturbance rejection. 
The general form of a linear dynamic output feedback controller with input $y$ and output $u$ is
\begin{subequations}
\label{eq:lorp:dfc}
\begin{align}
    \dot{\xi} &= F\xi + Gy, \label{eq:lorp:dfc a}\\
    u         &= H\xi + \Gamma y, \label{eq:lorp:dfc b}
\end{align}
\end{subequations}
where $\xi\in \Xi \subseteq \mathbb{R}^\nu$ represents the state of the controller. The closed loop dynamics of the disturbed linear system \eqref{eq:lorp:LORP} influenced by the exosystem \eqref{eq:pf:exosys plus output} and the controller \eqref{eq:lorp:dfc} obeys 
\begin{equation}
\label{eq:lorp:closed loop matrix}
   \begin{bmatrix} \dot x \\ \dot \xi \end{bmatrix}= A_c \begin{bmatrix} x \\ \xi \end{bmatrix} + \begin{bmatrix} P v \\ {\mathbf 0} \end{bmatrix}, \quad  A_c = \begin{bmatrix} (A+B\Gamma C) & BH \\ GC & F \end{bmatrix}.
\end{equation}

%Following the specification of the control goals, the controller synthesis problem is determined: given matrices $\{A,B,C,E,P,S\}$ we want to find $\{F,G,H,\Gamma\}$ such that $A_c$ is Hurwitz and for any initial condition $(x(0),\xi(0),w(0))$, $\lim_{t\to\infty} y(t) = 0$ is satisfied by:
%    \begin{align}
%        \Dot{x} &= Ax+B(H\xi+\Gamma y)+Pv \\
%        \Dot{\xi} &= F\xi+Gy\\
%        \Dot{w} &=Sw 
%   \end{align}
It is known that the stabilizability of the pair $(A,B)$ together with the detectability of the pair $(C,A)$ guarantee the existence of matrices  $\{F,G,H,\Gamma\}$ such that $A_c$ is Hurwitz \cite{isidori2003}. Since $A_c$ is Hurwitz, the linear output regulation problem is solved by the controller \eqref{eq:lorp:dfc} if and only if there exist $\Pi\in\mathbb{R}^{n \times r}$ and $\Sigma \in\mathbb{R}^{\nu \times r}$ such that 
\begin{subequations}
\label{eq:lorp:lre and imp}
\begin{align}
    \Pi S    &= (A+B\Gamma C) \Pi + BR + PE, \label{eq:lorp:lre a} \\
    0        &= C\Pi    ,       \label{eq:lorp:lre b} \\
    \Sigma S &= F\Sigma,        \label{eq:lorp:imp a}\\
    R        &= H\Sigma ,       \label{eq:lorp:imp b}
\end{align}
\end{subequations}
are satisfied. The first two equations are called the linear regulator equations \cite{francis1977} and the last two address the internal model principle \cite{francis1975}.
%=============================================================================================
\subsection{Koopman Operator Theory}
Consider the autonomous continuous-time dynamical system described by
\begin{equation}
\label{eq:koopman:autonomous system}
    \dot{x} = f(x),
\end{equation}
with $x\in X\subseteq \mathbb{R}^n$ the state and $f$ a nonlinear function. Integrating \eqref{eq:koopman:autonomous system} yields trajectories $x(t) = F^t(x_0)$, where 
\begin{equation}
    F^t(x_0) = x_0 + \int_{0}^t f(x(\tau))d\tau
\label{eq:koopman:flow}
\end{equation}
is the flow of the system. 

Next, consider functions of the state $\psi:X\to\mathbb{R}$, which are called observable functions.
Denote the space of all such functions as $\mathcal{F}$. The family of Koopman operators $\mathcal{K}^t:\mathcal{F}\mapsto \mathcal{F}$ is defined by 
\begin{equation}
\label{eq:koopman:koopman operator}
    \mathcal{K}^t\psi(x) 
    = \psi(x) \circ F^t 
    = \psi(F^t(x)).
\end{equation}
Since $\mathcal{K}^{t_2}(\mathcal{K}^{t_1}\psi(x)) = \psi(F^{t_1+t_2}(x))$ we simply write $\mathcal{K}$ and refer to it as the Koopman operator. The Koopman operator is infinite-dimensional and linear.

The infinitesimal generator $\mathcal{L}$, of the Koopman operator \cite{lasota1994} is defined by
\begin{equation}
\label{eq:koopman:lie derivative}
    \mathcal{L}\psi 
    = \lim_{t\to 0}\frac{\mathcal{K}^t \psi - \psi}{t} 
    = \lim_{t \to 0}\frac{\psi(F^t(x))-\psi(x)}{t}.
\end{equation}
From \eqref{eq:koopman:lie derivative} we see that $\mathcal{L}$ corresponds to the time derivative of $\psi$ along the trajectories of \eqref{eq:koopman:autonomous system}, \ie $\dot{\psi} = \mathcal{L}\psi$.

The advantage of the Koopman operator is that it allows a nonlinear system to be described linearly. Although such descriptions are generally infinite-dimensional, in some cases a finite, albeit higher dimensional, description can be approximated. A set $\mathcal{D}\subset \mathcal{F}$ of observable functions which satisfies
\begin{equation}
    \mathcal{D} = \{\psi \in \mathcal{F} \mid \psi \in \mathcal{D} \implies \dot{\psi} \in \text{span}(\mathcal{D})\},
\end{equation}
is called a Koopman invariant subspace \cite{brunton2016}. Let
%\begin{equation}
%\label{eq:koopman:psi}
    $\Psi = [\psi_1,\dots,\psi_M]^T$,
%\end{equation}
with $\psi_i\in\mathcal{D}$. Write 
\begin{equation}
\label{eq:koopman:z}
    z = \Psi(x),
\end{equation}
then it follows that
\begin{equation}
\label{eq:koopman:lti}  
    \dot{z} = Az,
\end{equation}
where $A_{ij}$ is the $j$-th expansion coefficient of $\psi_i$ in $\mathcal{D}$. Hence, a Koopman invariant subspace containing the state components $\psi_i = x_i$ yields the equivalent finite-dimensional linear description \eqref{eq:koopman:lti} of the autonomous nonlinear system \eqref{eq:koopman:autonomous system}, \cite{brunton2016}.
%==================================================================================================================================================
\subsection{Problem Formulation}
This paper considers the utilization of the Koopman operator framework in dealing with matched input disturbances (e.g. \cite{matin1,matin2,matin3}) of nonlinear systems of the form
\begin{subequations}
\label{eq:pf:sys}
\begin{align}
    \dot{x} &= f(x) + g(x)(u+v)  \label{eq:pf:sys a} \\
    y      &= h(x),             \label{eq:pf:sys b}
\end{align}
\end{subequations}
with $x\in X\subseteq \mathbb{R}^n$, $y \in\mathbb{R}^l$, and $f$, $g$ and $h$ smooth, Lipschitz nonlinear functions. In the rest of the paper, we restrict our attention to the scalar input and disturbance case, \ie $u,v \in\mathbb{R}$, and assume that $v$ is the output of the linear exosystem \eqref{eq:pf:exosys plus output}. In the next section, we first characterize a class of nonlinear systems admitting a Koopman representation using a finite number of observables. Therefore, solving the output regulation problem of the bilinear system equivalently solves the output regulation problem of the aforementioned nonlinear system. Inspired by the linear output regulation framework, we design a linear dynamic output feedback controller of the form \eqref{eq:lorp:dfc} for the bilinear system that achieves disturbance rejection and thereby output regulation of the equivalent nonlinear system.
%=========================================================================================
\section{Nonlinear matched input disturbance rejection}
\label{section:norp Koopman}
This section presents the main results of the paper. First, we present a lemma, based on \cite{surana2016,brunton2016,goswami2017}, essential in specifying the class of nonlinear systems that can be represented using finite number of
observables in a Koopman invariant subspace $\mathcal{D}$ \cite{brunton2016}.
%-------------------------------------------------------------------
\begin{lemma}\label{lem1}
Consider the dynamical system \eqref{eq:pf:sys} evolving in $X\subseteq\mathbb{R}^n$. Suppose there exists a set $\mathcal{D}$ of finite number of observable functions $\psi_i:X\mapsto\mathbb{R}$, $i \in\{1,\ldots,M\}$, with $M=\text{dim}(\text{span}(\mathcal{D})) > n$, that satisfies the following properties:
\begin{enumerate}
    \item if $u=0, v=0$ and $\psi_i \in\mathcal{D}$ then $\dot{\psi_i} \in\text{span}(\mathcal{D}), \forall i\in \{1,\ldots,M\}$,
    \item if $\psi_i \in\mathcal{D}$ and $\frac{d}{dt}(\frac{\partial \psi_i}{\partial x_j}g_{j}) \neq 0$, then $\frac{\partial \psi_i}{\partial x_j}g_{j} \in\text{span}(\mathcal{D}), \forall i \in \{1,\ldots,M\}, \forall j\in \{1,\ldots,n\}$,
    \item it holds that $h_p\in\text{span}(\mathcal{D}), \forall p\in\{1,\ldots,l\},$
    \item it holds that $x_j\in\text{span}(\mathcal{D}), \forall j\in\{1,\ldots,n\}$.
\end{enumerate}
Then the set $\mathcal{D}$ is a Koopman invariant subspace, and the nonlinear system \eqref{eq:pf:sys} is described equivalently by the following bilinear system
\begin{subequations}
\label{eq:lemma1:koopman bilinear system}
\begin{align}
    \dot{z} &= Az + B(u+v) + Nz(u+v), \label{eq:lemma1:koopman bilinear system a} \\
    y       &= Cz, \label{eq:lemma1:koopman bilinear system b}
\end{align}
\end{subequations}
with $z=\Psi(x)\in Z \subseteq \mathbb{R}^M$, where $\Psi = [\psi_1,\dots,\psi_M]^T$, and $A$, $B$, $N$ and $C$ are matrices whose elements are determined by the following equations %the expansion coefficients of the decomposition of relevant functions in terms of the functions of $\mathcal{D}$.
\begin{equation}
    \dot{\psi}_i(x) 
    = \sum_{k=1}^M a_{ik}\psi_{k} 
    + b_{i}(u+v) + \sum_{k=1}^M n_{ik} \psi_k (u+v),
\label{eq:lemma1:psidot2}
\end{equation}
\begin{equation}\label{eq:C}
    h_p (x) = \sum_{k=1}^M c_{pk}\psi_k(x).
\end{equation}
\end{lemma}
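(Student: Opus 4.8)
The plan is to obtain the bilinear model by differentiating each generating observable $\psi_i$ along the flow of \eqref{eq:pf:sys} and then reading off the four matrices term by term from the claimed identity \eqref{eq:lemma1:psidot2}. Since $M=\dim(\operatorname{span}(\mathcal{D}))$, I will take $\psi_1,\dots,\psi_M$ to be linearly independent, so that every element of $\operatorname{span}(\mathcal{D})$ has a unique expansion in this basis; this is what makes the coefficients $a_{ik},b_i,n_{ik},c_{pk}$ well defined. First, the chain rule applied to \eqref{eq:pf:sys a} gives
\[
  \dot\psi_i=\sum_{j=1}^n\frac{\partial\psi_i}{\partial x_j}\dot x_j
  =\underbrace{\sum_{j=1}^n\frac{\partial\psi_i}{\partial x_j}f_j(x)}_{=:L_f\psi_i}
  +\Big(\underbrace{\sum_{j=1}^n\frac{\partial\psi_i}{\partial x_j}g_j(x)}_{=:L_g\psi_i}\Big)(u+v).
\]
Putting $u=v=0$ turns \eqref{eq:pf:sys a} into $\dot x=f(x)$ and identifies $L_f\psi_i$ with $\dot\psi_i$ of this undisturbed autonomous system; Property~1 then yields $L_f\psi_i\in\operatorname{span}(\mathcal{D})$, that is, $L_f\psi_i=\sum_{k=1}^M a_{ik}\psi_k$, which defines $A=(a_{ik})$ and, by linearity of $L_f$, extends from the $\psi_i$ to all of $\operatorname{span}(\mathcal{D})$ — precisely the statement that $\mathcal{D}$ is a Koopman invariant subspace of $\dot x=f(x)$.

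Next I would treat the input channel. Set $\phi_{ij}:=\frac{\partial\psi_i}{\partial x_j}g_j$, so that $L_g\psi_i=\sum_{j=1}^n\phi_{ij}$. Property~2 provides the dichotomy that, for each pair $(i,j)$, either $\dot\phi_{ij}=0$ (so $\phi_{ij}$ is a constant) or $\phi_{ij}\in\operatorname{span}(\mathcal{D})$. Splitting the sum over $j$ along this dichotomy, the constant contributions add up to a scalar $b_i$ while the remaining terms form a finite sum of elements of $\operatorname{span}(\mathcal{D})$, hence an element $\sum_{k=1}^M n_{ik}\psi_k$ of it. Therefore $L_g\psi_i=b_i+\sum_k n_{ik}\psi_k$, and substituting $z=\Psi(x)$ into the expression for $\dot\psi_i$ produces exactly \eqref{eq:lemma1:psidot2}, i.e., $\dot z=Az+B(u+v)+Nz(u+v)$ with $B=(b_i)$ and $N=(n_{ik})$. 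For the output, Property~3 gives $h_p\in\operatorname{span}(\mathcal{D})$, hence $h_p=\sum_k c_{pk}\psi_k$ and $y=Cz$, which is \eqref{eq:C}--\eqref{eq:lemma1:koopman bilinear system b}.

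It remains to justify that \eqref{eq:lemma1:koopman bilinear system} is \emph{equivalent} to \eqref{eq:pf:sys}. Property~4 gives $x_j\in\operatorname{span}(\mathcal{D})$, so there is a matrix $D$ with $x=D\Psi(x)=Dz$; thus $\Psi$ is injective on $X$ and the original state — and hence the output — is recovered linearly from $z$. Together with the fact that, by uniqueness of ODE solutions, $z(0)=\Psi(x(0))$ forces $z(t)=\Psi(x(t))$ for all $t$, this shows that on $Z:=\Psi(X)$ the two systems have identical input--output trajectories (and $Z$ is forward invariant for \eqref{eq:lemma1:koopman bilinear system}). The step I expect to require the most care is the decomposition of $L_g\psi_i$ in the second paragraph: one has to make precise in what sense $\dot\phi_{ij}=0$ in Property~2 is intended (derivative along the drift $f$ versus along the full controlled vector field) and argue that such a conserved $\phi_{ij}$ really is a genuine constant function — so that it may be absorbed into $b_i$ rather than contaminating the state-dependent part $Nz$ — whereas everything else is routine bookkeeping with the chain rule and with expansions in the basis $\{\psi_k\}$.
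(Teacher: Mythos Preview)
Your proposal is correct and follows essentially the same route as the paper: apply the chain rule to $\dot\psi_i$, use Property~1 for the drift part to get $A$, use Property~2 to split $L_g\psi_i$ into a constant $b_i$ and a $\operatorname{span}(\mathcal{D})$ part to get $B,N$, and use Properties~3 and~4 for the output and state reconstruction. You add a bit more care than the paper on equivalence (ODE uniqueness, forward invariance of $Z=\Psi(X)$) and rightly flag the ambiguity in Property~2 about why $\dot\phi_{ij}=0$ forces $\phi_{ij}$ to be a genuine constant, which the paper simply absorbs into writing $\phi_{ij}=\nu_{ij}+\sum_k\mu_{ijk}\psi_k$ without comment.
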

%--------------------------------------------------
\begin{proof}
Choose a set $\mathcal{D}$ of observable functions $\psi$ that satisfies the above properties. For any $\psi_i\in\mathcal{D}$ the time derivative along the trajectories satisfying \eqref{eq:pf:sys} is given by
\begin{equation}
\label{eq:lemma1:psidot}
    \dot{\psi}_i(x) = \sum_{j=1}^n \frac{\partial\psi_i}{\partial x_j }\bigg(f_j(x) + g_{j}(x) (u + v) \bigg).
\end{equation}
Property 1 implies that we can write 
\begin{equation}
\label{eq:lemma1:property 1 implication 1}
    \sum_{j=1}^n \frac{\partial \psi_i}{\partial x_j}f_j(x) = \sum_{k=1}^M a_{ik}\psi_k(x),
\end{equation}
with $a_{ik} \in \mathbb{R}$. Property 2 implies that we can write 
\begin{equation}
\label{eq:lemma1:property 2 implication 1}
    \frac{\partial \psi_i}{\partial x_j}g_{j}(x) = \nu_{ij} + \sum_{k=1}^M \mu_{ijk} \psi_k(x),
\end{equation}
%\footnote{Note that the constant term $\nu_{ij}$ is non-zero only if (part of) $\frac{\partial \psi_i}{\partial x_j}g_{j}$ is constant.} 
with $\nu_{ij},\mu_{ijk}\in\mathbb{R}$. Using equation \eqref{eq:lemma1:property 2 implication 1} the second term in equation \eqref{eq:lemma1:psidot} may be written as 
\begin{align}
\label{eq:lemma1:property 2 implication 2}
    \bigg(\sum_{j=1}^n \frac{\partial \psi_i}{\partial x_j}g_{j}(x)\bigg)u
    &=  \sum_{j=1}^n \bigg (\nu_{ij} + \sum_{k=1}^M \mu_{ijk} \psi_k(x)\bigg)u \nonumber \\
    &=  b_{i} u + \sum_{k=1}^M n_{ik} \psi_k(x) u,
\end{align}
with $b_{i} = \sum_{j=1}^n \nu_{ij}$ and $n_{ik} = \sum_{j=1}^n \mu_{ijk}$. Substituting \eqref{eq:lemma1:property 1 implication 1} and \eqref{eq:lemma1:property 2 implication 2} in equation \eqref{eq:lemma1:psidot}, we obtain \eqref{eq:lemma1:psidot2}. Let $z = \Psi(x)$ and define the matrices and $A = (a_{ij})$, $B = (b_{i})$ and $N = (n_{ij})$. Using these definitions and \eqref{eq:lemma1:psidot2} we obtain \eqref{eq:lemma1:koopman bilinear system a}. With the dynamics of $z$ known and the initial condition $z_0 = \Psi(x_0)$, property 4 allows the reconstruction of the state $x(t)$ from the state $z(t)$ at each time $t$. Finally, property 3 gives \eqref{eq:C}.
%Define $C=(c_{ij})$, then we can write $h(x) = Cz$ and the output \eqref{eq:pf:sys b} can be written as \eqref{eq:lemma1:koopman bilinear system b}.
\end{proof}
%\begin{remark}
%The requirement $z=\Psi(x)$ for all $x$ and $z$ is equivalent to the requirement that $z_0=\Psi(x_0)$. The reason for this is that the dynamics of $z$ and $x$ are equivalent, therefore, if the initial conditions satisfy $z_0=\Psi(x_0)$, then this will remain true for future states.
%\end{remark}
%\maria{Notice that the properties ensure that the class of nonlinear systems admitting a bilinear representation is defined within a Koopman invariant subspace. This subspace preserves the system dynamics while addressing the inherent nonlinearity.} 
Notice that the element $b_i$ of the vector $B$ is non-zero only if the function $\frac{\partial \psi_i}{\partial x_j}g_{j}$ has a constant part, that is, $\nu_{ij}$ is non-zero. Typically, the non-zero terms of $B$ appear at the state projections $\psi=x_i$ where the input is added linearly. A simple example is one where $f$ is such that there exists $\mathcal{D}$ that satisfies properties 1 and 4 of Lemma \ref{lem1}, and $g(x) = B$. In the result below, it is crucial that $f$, $g$, and $\mathcal{D}$ are such that non-zero elements of $B$ appear that can be used for control. Assumption \ref{as1} (see below) captures this requirement. 
%==============================================================================================================================
\subsection{Bilinear matched input disturbance rejection}
%Furthermore, we require the pair $\big(\big[\begin{smallmatrix} A & P \\ 0 & S  \end{smallmatrix}\big],[\begin{smallmatrix} C & 0 \end{smallmatrix}]\big)$ to be detectable. Given these assumptions, the matrix $A_c$ can be made Hurwitz.
This section utilizes the equivalence of systems \eqref{eq:pf:sys} and \eqref{eq:lemma1:koopman bilinear system}, the former a nonlinear system satisfying Lemma \ref{lem1} and the latter a bilinear system, in tackling the nonlinear output regulation problem. %We study the bilinear system \eqref{eq:lemma1:koopman bilinear system} subject to the disturbance generated by an exosystem as in \eqref{eq:pf:exosys plus output}. 
\begin{problem}[Regional bilinear output regulation] Consider the bilinear system \eqref{eq:lemma1:koopman bilinear system} subject to a matched input disturbance generated by the linear exosystem \eqref{eq:pf:exosys plus output}. The problem is that given a compact set of initial conditions whether there exists a linear output feedback controller capable of achieving regional stability for the undisturbed system, \ie ensuring that trajectories originating from that compact set converge to an asymptotically stable equilibrium within the set, while also regulating the output of the closed-loop disturbed system, that is $\lim_{t\to\infty} y(t) = 0$.
\end{problem}

\begin{assumption}\label{as1}
For the bilinear system \eqref{eq:lemma1:koopman bilinear system}, the pair $(A,B)$ is stabilizable, and the pair $(A,C)$ is detectable.
\end{assumption}
%The above assumption guarantees that the closed-loop matrix \eqref{eq:lorp:closed loop matrix} can be made Hurwitz \cite{isidori2003}. We now obtain conditions under which a controller of the form \eqref{eq:lorp:dfc} that locally solves the output regulation problem of the system given by \eqref{eq:lemma1:koopman bilinear system}. 
In the spirit of linear output regulation, we will look for a trajectory $(z(t),\xi(t),w(t)) = (\Pi w(t), \Sigma w(t), w(t))$ of the bilinear system \eqref{eq:lemma1:koopman bilinear system} influenced by the exosystem \eqref{eq:pf:exosys plus output} and the controller \eqref{eq:lorp:dfc}, that satisfies the closed-loop dynamics. Thus, the following equations must be satisfied
\begin{subequations}
\label{eq:borp:bre}
\begin{align}
    \Pi S    &= (A+B\Gamma C)\Pi\!\!+\!(B+N \Pi w)(H \Sigma+E)+ N\Pi w \Gamma C \Pi, \label{eq:borp:bre a} \\
    0        &= C\Pi,           \label{eq:borp:bre b} \\
    \Sigma S &= F\Sigma + GC\Pi,        \label{eq:borp:bre c}\\
    R        &= H\Sigma. \label{eq:borp:bre d}
\end{align}
\end{subequations}
%If $R= -E$, or equivalently $H\Sigma + E = 0$, we can ignore the last term in \eqref{eq:borp:bre a}. If we then add $C\Pi$ (which is zero by \eqref{eq:borp:bre b}) to \eqref{eq:borp:bre c} we find that \eqref{eq:borp:bre} corresponds to the Sylvester equation
Writing the above equations in the compact form, we obtain
\begin{equation}
\label{eq:sylv eq}
    \begin{bmatrix} \Pi \\ \Sigma \end{bmatrix} S -
    \underbrace{\begin{bmatrix} A+B\Gamma C & BH \\ GC & F \end{bmatrix}}_{A_c}
    \begin{bmatrix} \Pi \\ \Sigma \end{bmatrix} =
    \begin{bmatrix} BE \\ 0 \end{bmatrix} + \gamma(w),
\end{equation}
where $\gamma(w)= N \Pi w (H \Sigma+E)+ N \Pi w \Gamma C \Pi$. Notice that Assumption \ref{as1} guarantees that the closed-loop matrix \eqref{eq:lorp:closed loop matrix} can be made Hurwitz \cite{isidori2003}. We design $H\Sigma + E = 0$. Since $C\Pi=0$, it holds that $\gamma(w)=0$, and the remaining matrix equality in \eqref{eq:sylv eq} is the Sylvester equation. Since the spectrum of $S$ lies on the imaginary axis and $A_c$ is Hurwitz, $\Pi = 0$ is the unique solution satisfying \eqref{eq:sylv eq}. %\maria{In other words, the state $z$ at which the output is regulated corresponds to a region of attraction centered around the origin.} %Note that the requirement $H\Sigma + E = 0$ is satisfied for, for instance, 
%\begin{equation}
%    H = \begin{bmatrix} \Gamma & -E\end{bmatrix} \quad \text{and} \quad \Sigma = \begin{bmatrix} 0 \\ I \end{bmatrix},
%\end{equation}
%where $\Gamma$ is the part of the controller that stabilizes the system.
Thus, we consider the error dynamics of $(\dot z, \dot{\tilde{\xi}})$, where $\tilde{\xi} = \xi - \Sigma w$. Consider the bilinear system in \eqref{eq:lemma1:koopman bilinear system} with the linear exosystem \eqref{eq:lorp:LORP}. The closed-loop error dynamics obey 
\begin{subequations}
\label{eq:prop2:cl1}
\begin{align}
    \begin{split}
    \dot{z} &= (A+B\Gamma C)z + B H\tilde{\xi} + Nz H\tilde{\xi}+Nz\Gamma Cz \\
            &\hspace{12pt} + B(H\Sigma + E)w + Nz(H\Sigma + E)w,
    \end{split} \\
    \dot{\tilde{\xi}} &= F\tilde{\xi} + GCz + (F\Sigma + \Sigma S)w, \\
    \dot{w} &= Sw.
\end{align}
\end{subequations}

Let $p(t)$ denotes the closed-loop state $\begin{bmatrix} z(t) \\ \xi(t) - \Sigma w(t) \end{bmatrix}$.\\ 
%\maria{We continue by proving that the above controller additionally rejects the matched disturbance of the bilinear system \eqref{eq:lemma1:koopman bilinear system}, in a region of attraction centered around the origin}.
%-------------------------------------------------------------------------------------------------------------------------------------------------------------------------
\begin{proposition}\label{pr1}
Consider the bilinear system in \eqref{eq:lemma1:koopman bilinear system} under Assumption \ref{as1}, and subject to the matched input disturbance generated by the exosystem \eqref{eq:pf:exosys plus output}. The linear dynamic output feedback controller \eqref{eq:lorp:dfc} solves the regional bilinear output regulation problem for the initial conditions $p(0)\in \mathcal{I}(w)=\{p \in Z \times \Xi : p^T W^{-1} p \leq 1\}$, with $0 \prec W=W^T$, provided that the following conditions hold: 
\begin{enumerate}
\item the internal model principle is satisfied, \ie there exist a matrix $\Sigma$, such that $\Sigma S=F \Sigma$,
\item $H\Sigma + E = 0$ holds,
\item %the initial conditions $I(0)$ with $I= \begin{bmatrix} z \\ \xi - \Sigma w \end{bmatrix} $ satisfies
%\begin{equation}
 %   I(0)\in \mathcal{I}(w);\  \mathcal{I}(w)=\{I\in Z \times \Xi \mid I^T W^{-1} I \leq 1\},
%\label{eq:prop2:stabilizing ellispoid}
%\end{equation}
$\exists \epsilon >0$ such that %$W \in \mathbb{R}^{(N+\nu) \times (N + \nu)}$ 
$W \succ 0$ satisfies
\begin{equation}
    \begin{bmatrix}
    WA_c^T + A_cW + \epsilon \tilde{N}W\tilde{N}^T &     
    W \tilde{H}^T\\
    \tilde{H}W & -\epsilon I 
    \end{bmatrix} \prec 0,
\label{eq:prop2:matrix ineq}
\end{equation}
where     %I &= \begin{bmatrix} z \\ \tilde{\xi} \end{bmatrix}, \qquad
%\begin{equation}
%\label{eq:prop2:tilde definitions}
%\begin{aligned}
    $\tilde{N} = \begin{bmatrix} N & 0 \\ 0 & 0\t \end{bmatrix}\ \text{and}\  
\tilde{H} = \begin{bmatrix} \Gamma C & H \end{bmatrix}.$
%\end{aligned}
%\end{equation}
\end{enumerate}
\end{proposition}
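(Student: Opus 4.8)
\emph{Proof strategy.} The plan is to run a Lyapunov argument on the closed-loop error dynamics \eqref{eq:prop2:cl1} with the quadratic candidate $V(p)=p^T W^{-1}p$, showing that the ellipsoid $\mathcal{I}(w)=\{p:V(p)\le 1\}$ is forward invariant and that $V$ decays strictly (indeed exponentially) on it; regional stabilization of the undisturbed loop and $y(t)\to 0$ for the disturbed loop will then both follow from the same estimate.

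First I would use the two structural conditions to strip the exogenous signal out of \eqref{eq:prop2:cl1}. Condition~2, $H\Sigma+E=0$, annihilates the terms $B(H\Sigma+E)w$ and $Nz(H\Sigma+E)w$ in the $z$-equation, while condition~1, $\Sigma S=F\Sigma$, removes the remaining $w$-dependent term in the $\tilde\xi$-equation; what is left is the autonomous bilinear system $\dot p=A_c p+\tilde N p\,(\tilde H p)$, obtained by simply regrouping \eqref{eq:prop2:cl1} and noting that $\tilde N p=[(Nz)^T\ 0]^T$ and $\tilde H p=\Gamma C z+H\tilde\xi$. (As a side remark reconciling this with the discussion preceding the statement, the same substitutions collapse \eqref{eq:sylv eq} to a Sylvester equation whose unique solution is $\Pi=0$, since $A_c$ is Hurwitz and the spectrum of $S$ lies on the imaginary axis, so the target manifold is $z=0$.)

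Next, differentiating $V$ along $\dot p=A_c p+\tilde N p(\tilde H p)$ gives $\dot V=p^T(A_c^T W^{-1}+W^{-1}A_c)p+2(\tilde H p)\,p^T W^{-1}\tilde N p$, and the whole difficulty is the cubic cross-term. I would dominate it by Young's inequality with the free parameter $\epsilon$: writing $2(\tilde H p)\,p^T W^{-1}\tilde N p=2\big(\sqrt{\epsilon}\,\tilde N^T W^{-1}p\big)^T\big(\tfrac{1}{\sqrt{\epsilon}}(\tilde H p)\,p\big)$ and applying $2u^T v\le u^T W u+v^T W^{-1}v$ yields $2(\tilde H p)\,p^T W^{-1}\tilde N p\le \epsilon\,p^T W^{-1}\tilde N W\tilde N^T W^{-1}p+\tfrac{1}{\epsilon}(\tilde H p)^2\,p^T W^{-1}p$. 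On $\mathcal{I}(w)$ we have $p^T W^{-1}p\le 1$, so the last term is at most $\tfrac{1}{\epsilon}(\tilde H p)^2=\tfrac{1}{\epsilon}p^T\tilde H^T\tilde H p$, hence $\dot V\le p^T\big(A_c^T W^{-1}+W^{-1}A_c+\epsilon\,W^{-1}\tilde N W\tilde N^T W^{-1}+\tfrac{1}{\epsilon}\tilde H^T\tilde H\big)p$. The matrix in parentheses is exactly the congruence transform by $W^{-1}$ of the Schur complement of \eqref{eq:prop2:matrix ineq} with respect to its $(2,2)$-block $-\epsilon I$, so condition~3 makes it negative definite; consequently $\dot V\le -c\,V$ on $\mathcal{I}(w)$ for some $c>0$. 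Strict negativity of $\dot V$ on $\mathcal{I}(w)\setminus\{0\}$ makes $\mathcal{I}(w)$ forward invariant (so the bound $p^T W^{-1}p\le 1$ used above is consistent for all $t\ge 0$), and $V(p(t))\le e^{-ct}V(p(0))$ then forces $p(t)\to 0$ for every $p(0)\in\mathcal{I}(w)$. With the disturbance set to zero, $p=(z,\xi)$ and this is precisely regional asymptotic stability of the equilibrium $p=0$ of the undisturbed closed loop with region of attraction containing $\mathcal{I}(w)$; in the disturbed case $z(t)\to 0$ gives $y(t)=Cz(t)\to 0$. I expect the main obstacle to be exactly the handling of the cubic term: the $\epsilon$-completion of squares closes only on a sublevel set of $V$, which is why the result is regional and why the ellipsoid $\mathcal{I}(w)$ enters the statement; a secondary, purely bookkeeping point is to keep $\mathcal{I}(w)\subseteq Z\times\Xi$ so that the bilinear model of Lemma~\ref{lem1} remains valid along trajectories, which is built into the definition of $\mathcal{I}(w)$.
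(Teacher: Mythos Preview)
Your proof is correct and follows the same overall architecture as the paper: reduce \eqref{eq:prop2:cl1} to the autonomous bilinear error system $\dot p=A_c p+\tilde N p(\tilde H p)$ via conditions~1--2, then run a quadratic Lyapunov argument with $V(p)=p^TW^{-1}p$ on the ellipsoid $\mathcal{I}(w)$ to obtain strict decrease, forward invariance, and hence $z(t)\to 0$ and $y(t)\to 0$. The difference lies in how the cubic cross-term is absorbed. The paper rewrites the requirement $\dot V<0$ as the parametric matrix inequality $T+Y^Tp^T\tilde N^T+\tilde Npy\prec 0$ with $T=WA_c^T+A_cW$, $Y=\tilde HW$, and then invokes a modification of Petersen's lemma (Lemma~2 of \cite{khlebnikov2016}) to eliminate the ``uncertainty'' $p\in\mathcal{I}(w)$ and obtain the LMI \eqref{eq:prop2:matrix ineq} as an equivalent condition. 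You instead bound the cubic term directly by the weighted Young inequality $2u^Tv\le u^TWu+v^TW^{-1}v$ together with $p^TW^{-1}p\le 1$, arriving at $\dot V\le p^T\big(A_c^TW^{-1}+W^{-1}A_c+\epsilon W^{-1}\tilde NW\tilde N^TW^{-1}+\tfrac{1}{\epsilon}\tilde H^T\tilde H\big)p$, and then recognize this matrix as the $W^{-1}$-congruence of the Schur complement of \eqref{eq:prop2:matrix ineq}. Your route is more elementary and self-contained (no external robust-control lemma), while the paper's Petersen-lemma argument additionally yields that \eqref{eq:prop2:matrix ineq} is not only sufficient but also necessary for the parametric matrix inequality to hold on $\mathcal{I}(w)$; for the proposition as stated, only sufficiency is needed, so both arguments close the proof.
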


\begin{proof}
%The closed loop dynamics of the undisturbed plant obeys
%\begin{subequations}
%\label{eq:prop2:cl disco plant}
%\begin{align}
%    \dot{z}   &= Az + BH\xi + NzH\xi, \label{eq:prop2:cl disco plant b}\\
%    \dot{\xi} &= F\xi + GCz,          \label{eq:prop2:cl disco plant a}
%\end{align}
%\end{subequations}
%\b{which, since $A_c$ is Hurwitz, is locally (exponentially) stable, has an equilibrium point at $(\xi,z)=(0,0)$.}\t
Consider the closed-loop dynamics \eqref{eq:lemma1:koopman bilinear system} with $v$ generated by \eqref{eq:pf:exosys plus output}. Assuming that the internal model principle is satisfied, \ie  $\Sigma S=F \Sigma$, and $H\Sigma +E = 0$ holds, the closed-loop error dynamics \eqref{eq:prop2:cl1} reduces to
\begin{subequations}
\label{eq:prop2:cl2}
\begin{align}
    \dot{z}   &= (A+B\Gamma C)z + B H\tilde{\xi} + Nz H\tilde{\xi}+ Nz\Gamma Cz, \label{eq:prop2:cl2 b} \\
    \dot{\tilde{\xi}} &= F\tilde{\xi} + GCz. \label{eq:prop2:cl2 a} 
\end{align}
\end{subequations}
%We see that $(\tilde{\xi},z)=(0,0)$ is an equilibrium point of the system. 
Recall that the closed-loop matrix, $A_c$, is defined in \eqref{eq:lorp:closed loop matrix}. Considering $A_c$ and using the definitions $\tilde{N} = \begin{bmatrix} N & 0 \\ 0 & 0\t \end{bmatrix}\ \text{and}\  
\tilde{H} = \begin{bmatrix} \Gamma C & H \end{bmatrix}$, the closed-loop error dynamics \eqref{eq:prop2:cl2} can be re-written in the following compact form
\begin{equation}
\label{eq:prop2:cl3}
    \dot{p} = A_c p + \tilde{N} p \tilde{H} p,
\end{equation}
where $p(t)$ is the state of the closed loop system. Since the matrix $A_c$ is Hurwitz, considering  Assumption \ref{as1} and the controller design, the linear part of the dynamics \eqref{eq:prop2:cl3} is asymptotically stable. Now, define the following quadratic Lyapunov function \cite{khlebnikov2016}
\begin{equation}
    V(p) = p^T W^{-1} p.
\label{eq:prop2:quadratic lyapunov function}
\end{equation}
The time derivative of \eqref{eq:prop2:quadratic lyapunov function} along the trajectories of the closed-loop dynamics \eqref{eq:prop2:cl3} gives
\begin{equation}
\begin{aligned}
    \dot{V}(p) =&\, p^T\big(A_c^TW^{-1} + W^{-1}A_c \\
    &+ \tilde{H}^T p^T\tilde{N}^TW^{-1} + W^{-1}\tilde{N} p \tilde{H}\big) p.
\end{aligned}
\end{equation}
In what follows, using a modification of Petersen’s lemma \cite{khlebnikov2016}, we characterize conditions under which $\dot{V}(p) <0$ holds, that is
\begin{equation}
\label{eq:prop2:inequality 1}
\begin{aligned}
    &A_c^TW^{-1} + W^{-1}A_c \\
     & + \tilde{H}^Tp^T\tilde{N}^TW^{-1} + W^{-1}\tilde{N}p\tilde{H} \prec 0.
\end{aligned}
\end{equation}
Pre- and post multiplying the inequality \eqref{eq:prop2:inequality 1} with $W$ yields the equivalent inequality
\begin{equation}
\label{eq:prop2:inequality 2}
\begin{aligned}
    WA_c^T + A_c W + W\tilde{H}^T p ^T\tilde{N}^T + \tilde{N} p\tilde{H}W \prec 0.
\end{aligned}
\end{equation}
Let $T = WA_c^T + A_cW$  and $Y=\tilde{H}W$, then the inequality \eqref{eq:prop2:inequality 2} is written as
\begin{equation}
    T + Y^T p ^T \tilde{N}^T + \tilde{N}p Y \prec 0,
\label{eq:prop2:intermediate inequality}
\end{equation}
with $T = T^T$. Since $p(0)\in \mathcal{I}(w)$ \maria{and $0 \prec W=W^T$} it follows by Lemma $2$ in \cite{khlebnikov2016} that the inequality \eqref{eq:prop2:inequality 2} is satisfied if and only if there exist $\epsilon > 0$ such that the following inequality holds
\begin{equation}
    \begin{bmatrix}
    T + \epsilon \tilde{N}W\tilde{N}^T & Y^T \\
    Y & -\epsilon I \\
    \end{bmatrix} \prec 0.
\label{eq:prop2:LMI1}
\end{equation}
%which is exactly \eqref{eq:prop2:matrix ineq}. 
%The inequality \eqref{eq:prop2:LMI1} is satisfied by assumption, implying $\dot{V}(s_0) < 0$. 
The above inequality is exactly the condition in \eqref{eq:prop2:matrix ineq}. As a result, any trajectory $p(t)$ for which $p(0) \in\mathcal{I}(w)$ will converge to the origin, which means $z(t)$ converges to zero and
\begin{equation}
    \lim_{t\to\infty} y(t) = \lim_{t\to\infty} Cz(t) = 0.
\end{equation}
Hence, the disturbance is rejected and the output is regulated. %\maria{That is the bilinear output regulation problem is solved in a region of attraction centered around the origin.}
\end{proof}
%-----------------------------------------------------------------------------------------
\subsection {Nonlinear matched input disturbance rejection}
The equivalence of the nonlinear dynamical system \eqref{eq:pf:sys} and the bilinear dynamical system \eqref{eq:lemma1:koopman bilinear system} leads to a systematic approach guaranteeing that the problem of nonlinear matched input disturbance rejection is locally solved. 

\begin{corollary}\label{cor1}
Consider the nonlinear system \eqref{eq:pf:sys} subject to matched input disturbance generated by the exosystem \eqref{eq:pf:exosys plus output} admitting \maria{an equivalent} bilinear representation by the Koopman operator, based on Lemma \ref{lem1}. Then, the nonlinear output regulation problem is locally solved by the linear dynamic output feedback controller \eqref{eq:lorp:dfc} provided that conditions of Proposition \ref{pr1} are satisfied for its equivalent bilinear system.
%The nonlinear output regulation problem is addressed for the closed-loop initial conditions $\Bar{p}(0)=\begin{bmatrix}
%    x_0 \\
%    \xi_0-\Sigma\omega_0
%\end{bmatrix}$ such that $\Bar{p}(0)\in \mathcal{I}(w)=\{\Bar{p} \in X \times \Xi : {\begin{bmatrix}
%    \Psi(x_0) \\
%    \xi_0-\Sigma\omega_0
%\end{bmatrix}}^T W^{-1} \begin{bmatrix}
%    \Psi(x_0) \\
%    \xi_0-\Sigma\omega_0
%\end{bmatrix} \leq 1\}$, 
%with $0 \prec W=W^T$ if its corresponding bilinear system satisfies Assumption \ref{as1} as well as the conditions of Proposition \ref{pr1}.
\end{corollary}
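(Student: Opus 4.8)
The plan is to transport the conclusion of Proposition~\ref{pr1} from the bilinear Koopman model back to the original nonlinear plant through the coordinate change $z = \Psi(x)$ supplied by Lemma~\ref{lem1}; no genuinely new estimate is required, only careful bookkeeping of this change of variables and of its region of validity.

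First I would record that, under the hypotheses of Lemma~\ref{lem1}, the map $z = \Psi(x)$ intertwines the two closed loops: every trajectory $(x(t),\xi(t),w(t))$ of the nonlinear system \eqref{eq:pf:sys} driven by the exosystem \eqref{eq:pf:exosys plus output} and the controller \eqref{eq:lorp:dfc} is carried to the trajectory $(\Psi(x(t)),\xi(t),w(t))$ of the bilinear closed loop \eqref{eq:prop2:cl1}, since \eqref{eq:lemma1:psidot2} is precisely the image of \eqref{eq:pf:sys a} under $\Psi$. Conversely, property~4 of Lemma~\ref{lem1} yields a constant selection matrix $L$ with $L\,\Psi(x)=x$, so that $x(t)=L z(t)$ is reconstructed linearly from the bilinear state; and by \eqref{eq:C} the output error $y = h(x) = Cz$ is literally the same signal in both descriptions. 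Taking the observables to vanish at the operating point, i.e. $\Psi(0)=0$, the equilibrium $z=0$ of the undisturbed bilinear closed loop corresponds to $x=0$.

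Next I would invoke Proposition~\ref{pr1}: conditions (1)--(3) guarantee that, for every $p(0)=\big(z(0),\,\xi(0)-\Sigma w(0)\big)\in\mathcal{I}(w)$, the bilinear closed-loop state converges to the origin, the origin of the undisturbed bilinear closed loop is asymptotically stable on $\mathcal{I}(w)$, and $\lim_{t\to\infty}Cz(t)=0$. Pulling this back, the admissibility condition $p(0)\in\mathcal{I}(w)$ becomes a condition on $(x(0),\xi(0))$, namely that $\big(\Psi(x(0)),\,\xi(0)-\Sigma w(0)\big)$ lie in the ellipsoid $\mathcal{I}(w)$; because $\Psi$ is smooth with $\Psi(0)=0$, the preimage of $\mathcal{I}(w)$ under $(x,\xi)\mapsto(\Psi(x),\xi-\Sigma w)$ contains a neighbourhood of the origin in $X\times\Xi$. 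On that neighbourhood, $z(t)\to 0$ forces $x(t)=Lz(t)\to 0$, so $x=0$ is an asymptotically stable equilibrium of the undisturbed nonlinear closed loop, while $y(t)=Cz(t)\to 0$ for the disturbed one --- which is exactly local solvability of the output regulation problem for \eqref{eq:pf:sys}.

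I expect the only delicate point to be the word \emph{local}: one must verify that the ellipsoid $\mathcal{I}(w)$, tailored to the LMI \eqref{eq:prop2:matrix ineq}, pulls back to a bona fide neighbourhood of the operating point, and that the Koopman equivalence of Lemma~\ref{lem1} holds on a set containing the reconstructed trajectory, so that $x(t)=Lz(t)$ never leaves $X$ before converging. The first is handled by continuity of $\Psi$ together with $\Psi(0)=0$; the second follows from forward invariance of the sub-level set $\{p : p^TW^{-1}p\le 1\}$ under \eqref{eq:prop2:cl3}, which Proposition~\ref{pr1} already establishes via $\dot V<0$ on $\mathcal{I}(w)$, provided $\Psi(X)$ contains that sub-level set --- a mild requirement since the trajectory converges to $0\in\interior{X}$. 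With these observations the corollary follows at once.
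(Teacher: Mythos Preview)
The paper offers no explicit proof of this corollary, treating it as an immediate consequence of the equivalence established in Lemma~\ref{lem1} and the conclusion of Proposition~\ref{pr1}. Your argument correctly supplies exactly this implicit reasoning --- transporting the bilinear regulation result back through $z=\Psi(x)$, reading off $y=Cz=h(x)\to 0$ and $x=Lz\to 0$, and pulling the ellipsoid $\mathcal{I}(w)$ back to a neighbourhood of the origin --- and is the intended approach.
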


%\begin{proof}
%If $z_0$ satisfies \eqref{eq:koopman:z}, then, by Lemma \ref{lem1}, the nonlinear system \eqref{eq:pf:sys} is equivalently described by \eqref{eq:lemma1:koopman bilinear system}. It follows that, by Proposition \ref{pr1}, the nonlinear output regulation problem is locally solved for those initial conditions $(x_0,\xi_0,w_0)$ that satisfy \eqref{eq:prop2:stabilizing ellispoid}.
%\end{proof}
%----------------------------------------------------------------------------------------------
\subsection*{Controller synthesis}
%We assume the case of matched input disturbances and so the system is described by \eqref{eq:lemma1:koopman bilinear system}. 
%We now design a controller of the form \eqref{eq:lorp:dfc} that locally solves the output regulation problem of the system given by \eqref{eq:lemma1:koopman bilinear system}. 
Proposition \ref{pr1} gives conditions guaranteeing that a general form of a linear controller successfully deals with the regional bilinear matched input disturbance problem. Depending on the available information, several designs and tuning could be proposed for the linear controller. For this, we refer to the wide literature on designing linear output regulators \cite{isidori1997, isidori2003,huang2004}.  
%The desired trajectory of the bilinear system \eqref{eq:lemma1:koopman bilinear system} under the control \eqref{eq:lorp:dfc}, {$u=H \xi+ \Gamma y$}, for which \eqref{eq:norp:goal} is satisfied will be $(z(t),\xi(t),w(t)) = (\Pi w(t), \Sigma w(t),  w(t))$. The trajectory must satisfy the closed loop dynamics, so the following equations must hold
%\begin{subequations}
%\label{eq:borp:bre}
%\begin{align}
%    \Pi S    &= (A+B\Gamma C)\Pi\!\!+\!(B+N \Pi w)(H \Sigma+E)+ N\Pi w \Gamma C \Pi, \label{eq:borp:bre a} \\
 %   0        &= C\Pi,           \label{eq:borp:bre b} \\
 %   \Sigma S &= F\Sigma + G C\Pi,        \label{eq:borp:bre c}\\
%    R        &= H\Sigma. \label{eq:borp:bre d}
%\end{align}
%\end{subequations}
%Writing the above equations in the compact form, we obtain
%\begin{equation}
%\label{eq:sylv eq}
%    \begin{bmatrix} \Pi \\ \Sigma \end{bmatrix} S -
%    \underbrace{\begin{bmatrix} A+B\Gamma C & BH \\ GC & F \end{bmatrix}}_{A_c}
%    \begin{bmatrix} \Pi \\ \Sigma \end{bmatrix} =
%    \begin{bmatrix} BE \\ 0 \end{bmatrix} + \gamma(w),
%\end{equation}
%where $\gamma(w)= N \Pi w (H \Sigma+E)+ N \Pi w \Gamma C \Pi$. 

In the next section, for a given nonlinear system, we design a linear controller, satisfying the conditions of Proposition \ref{pr1}, with output $u(t)=u_1(t)+u_2(t)$, where $u_1(t)$ rejects the disturbance, and $u_2(t)$ stabilizes the closed-loop matrix $A_c$.% It is also worth noting that the computation of the basin of the attraction of the ellipsoid $\mathcal{I}(w)$, e.g. \cite{}{}, is beyond the scope of the current paper, deserves future attention. %We design $H\Sigma + E = 0$. From \eqref{eq:borp:bre b}, it holds that $\gamma(w)=0$, and the remaining matrix equality in \eqref{eq:sylv eq} is the Sylvester equation. Since the spectrum of $S$ lies on the imaginary axis and $A_c$ is Hurwitz, $(\Pi,\Sigma) = (0,I)$ is the unique solution satisfying \eqref{eq:sylv eq}. In other words, the state $z$ at which the output is (locally) regulated is the origin. According to our unique solution we will design: $H=-E$, $F=S$, $G=H^T=-E^T$ and $\Gamma$ is a scalar that will be designed in such a way that it ensures $A_c$ is Hurwitz.
%===============================================================================================================================
\section{Simulation results}
\label{section:simulation}
In this section, we numerically validate the results of Proposition \ref{pr1} and Corollary \ref{cor1}. Inspired by \cite{kaiser2021}, we consider the nonlinear system
\begin{subequations}
\label{eq:sim:example sys}
\begin{align}
    \dot{x}_1 &= \kappa_1 x_1 + x_1(u + v), \label{eq:sim:example sys a}\\
    \dot{x}_2 &= \kappa_2(x_2 - x_1^2) + u + v, \label{eq:sim:example sys b} \\
    y         &= x_2 - \frac{1}{6}x_2^3, \label{eq:sim:example sys e}
\end{align}
\end{subequations}
where $v$ is a scalar and the output of the sinusoidal exosystem
\begin{equation}
    \dot{w}_1 = -\kappa_3 w_2, \  %\label{eq:sim:exosystem a} \\
    \dot{w}_2 = \kappa_3 w_1, \ %\label{eq:sim:exosystem b} \\
    v  = w_1.
\end{equation}
We set the plant and exosystem parameters as $\kappa_1=-0.7$, $\kappa_2 = -0.3$ and $\kappa_3 = 4$.  As the nonlinear model \eqref{eq:sim:example sys} satisfies all the assumptions of Lemma \ref{lem1}, we built a dictionary in order to obtain an equivalent bilinear dynamical model. The respective matrices, $A,\;B,\;N$ and $C$ are provided in the Appendix. Using the Popov- Belevitch- Hautus (PBH) test, we confirm that Assumption \ref{as1} is satisfied for the bilinear system. Following Proposition \ref{pr1}, we design a linear dynamic output feedback controller \eqref{eq:lorp:dfc} as follows: 
%\begin{subequations}
%\label{eq:sim:example con}
%\begin{align}
   $$ F =S=\begin{bmatrix}
    0 & -4 \\
    4 & 0
 \end{bmatrix},\quad
   H =-E=\begin{bmatrix}
    -1 & 0
\end{bmatrix},\quad
    G =-E^T,$$
%\end{align}
%\end{subequations}
and choose a scalar $\Gamma=500$ such that %the matrix 
$A_c$ is Hurwitz. We fix the number $\epsilon = 0.01$ and determine positive definite $W$ such that the matrix inequality \eqref{eq:prop2:matrix ineq} is satisfied. %Feasibility of LMI inequality \eqref{eq:prop2:matrix ineq} implies that the matrix $A_c$ is Hurwitz. 
We set the initial conditions of the nonlinear system, exosystem, and the controller as $x(0)=[1\;1]^T; w(0)=[1\;1]^T; \xi(0)=[1\;1]^T$. Then, the nonlinear system will yield an initial output value $y_n(0)=5/6$. Using the Koopman mapping, the initial condition for the equivalent bilinear system is $z(0)=\begin{bmatrix} 1& 1 & 1 & 1 & 1 & 1 & 1 & 1 & 1 & 1\end{bmatrix}^T$. We verify that given $x(0)$ and $\xi(0)$, $z(0)$ satisfies \eqref{eq:koopman:z} and $p(0)$ is within the basin of attraction of the bilinear system $\mathcal{I}(w)$. 

Figure \ref{fig:undis} shows the time evolution of the bilinear system's output, %denoted by 
$y_b(t)$, as well as the output of the controller in the absence of disturbances. As shown the output and control law converge to zero. Figure \ref{fig:disb} shows time evolution of the sinusoidal disturbance, $v(t)$, which affects both the nonlinear system and its equivalent bilinear system, together with the bilinear system output and the controller output, $u(t)$. As shown despite the effect of the disturbance, the system output is regulated to zero. Figure \ref{fig:disn} shows the output of the controlled disturbed nonlinear system, %denoted by 
$y_n(t)$, together with the system's states. As shown the designed controller stabilizes the nonlinear system, and its output is regulated to zero.

\begin{figure}[h!]
    \centering
    \includegraphics[scale=0.51]{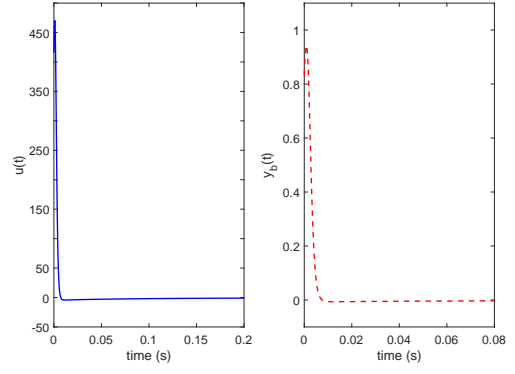}
    \caption{Time evolution of the control law and system output of the undisturbed bilinear system.}
    \label{fig:undis}
\end{figure}

\begin{figure}[h!]
    \centering
    \includegraphics[scale=0.51]{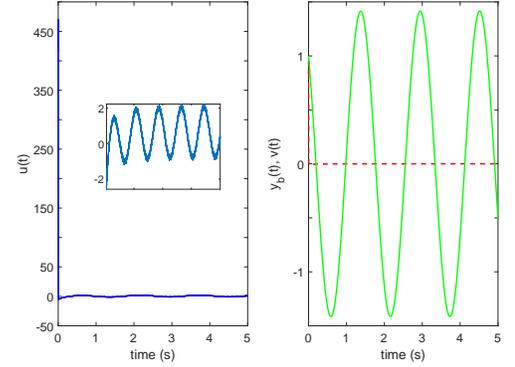}
    \caption{Time evolution of the disturbance signal, control law and output of the disturbed bilinear system.}
    \label{fig:disb}
\end{figure}

\begin{figure}[h!]
    \centering
    \includegraphics[scale=0.51]{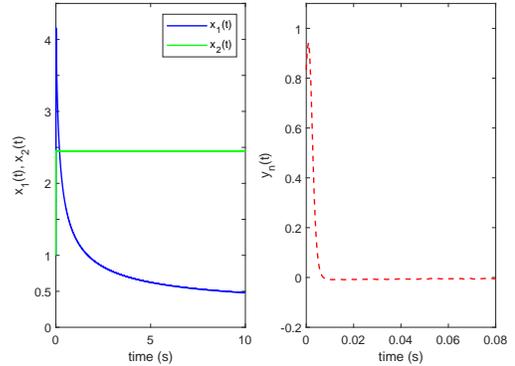}
    \caption{Time evolution of the states and output of the disturbed nonlinear system.}
    \label{fig:disn}
\end{figure}
%===========================================================================================================================
\section{Conclusion}
\label{section:conclusion}
In this paper, we proposed a novel approach for matched input disturbance rejection of nonlinear systems. A class of nonlinear control systems that admit
a finite-dimensional Koopman representation was considered. We proved that a linear dynamic output feedback controller can solve the regional bilinear matched input disturbance rejection problem, hence, solving the original nonlinear problem locally. Future research directions include extending this framework to tackle reference tracking and address projection errors arising from approximations of the bilinear model. Additionally, exploring data-driven methods using Koopman operator for output regulation is another interesting direction.
%\maria{In this paper, we have proposed a novel approach for the problem of disturbance rejection of nonlinear systems. A class of nonlinear control systems that can be equivalently described by a bilinear dynamical system using the Koopman operator was considered. We proved that a linear dynamic error feedback controller can tackle the bilinear matched input disturbance rejection problem. Future avenues include extending the framework to include the case of reference tracking and incorporating projection errors, that occur when the bilinear description is not exact, in the analysis. Studying the use of data-driven methods for the Koopman operator in the context of output regulation is another interesting future direction.}

%\addtolength{\textheight}{-12cm}   % This command serves to balance the column lengths
                                  % on the last page of the document manually. It shortens
                                  % the textheight of the last page by a suitable amount.
                                  % This command does not take effect until the next page
                                  % so it should come on the page before the last. Make
                                  % sure that you do not shorten the textheight too much.
%%%%%%%%%%%%%%%%%%%%%%%%%%%%%%%%%%%%%%%%%%%%%%%%%%%%%%%%%%%%%%%%%%%%%%%%%%%%%%%%
\section*{APPENDIX}
To find a Koopman bilinear model for the system \eqref{eq:sim:example sys}, we use the following dictionary $\mathcal{D}=\{\psi_1,\dots,\psi_{10}\}$, with
\begin{align*}
    &\psi_1(x) = x_1,\         
    \psi_2(x) = x_2,\ 
    \psi_3(x) = x_1^2, \\    
    &\psi_4(x) = x_2^2, \ 
    \psi_5(x) = x_2^3, \ 
    \psi_6(x) = x_1^2x_2, \\
    &\psi_7(x) = x_1^4,\  
    \psi_8(x) = x_1^2x_2^2,\ 
    \psi_9(x) = x_1^4x_2,\   
    \psi_{10}(x) = x_1^6.
\end{align*}
One can verify that such a dictionary satisfies the properties of Lemma \ref{lem1} for the nonlinear system given by \eqref{eq:sim:example sys}.

The system matrices of the bilinear Koopman model \eqref{eq:lemma1:koopman bilinear system} associated with the dictionary are obtained by computing the time derivative along the dynamics \eqref{eq:sim:example sys} for each $\psi_i \in \mathcal{D}$. Thus, if $\psi = x_1^n x_2^m$ and $m,n\geq1$, then
\begin{align}
\label{eq:app:time derivative expression}
    \dot{\psi}
    &= n x_1^{n-1} x_2^m \dot{x}_1 + m x_1^n x_2^{m-1} \dot{x}_2 \nonumber\\
    \begin{split}
    &= (n \kappa_1 + m \kappa_2) x_1^n x_2^m - m \kappa_2 x_1^{n+2}x_2^{m-1} \\
    &\hspace{11pt} + (nx_1^nx_2^m + mx_1^nx_2^{m-1})(u+v)
    \end{split}
\end{align}
Moreover the bilinear system matrices are provided below.
\begin{equation*}
    N = 
    \begin{bmatrix}
    1 &  0 & 0 &  0 & 0 &  0 & 0 &  0 & 0 & 0 \\
    0 &  0 & 0 &  0 & 0 &  0 & 0 &  0 & 0 & 0 \\
    0 &  0 & 2 &  0 & 0 &  0 & 0 &  0 & 0 & 0 \\
    0 & 2 & 0 &  0 & 0 &  0 & 0 &  0 & 0 & 0 \\
    0 &  0 & 0 & 3 & 0 &  0 & 0 &  0 & 0 & 0 \\
    0 &  0 & 1 &  0 & 0 &  2 & 0 &  0 & 0 & 0 \\
    0 &  0 & 0 &  0 & 0 &  0 & 4 &  0 & 0 & 0 \\
    0 &  0 & 0 &  0 & 0 & 2 & 0 & 2 & 0 & 0 \\
    0 &  0 & 0 &  0 & 0 &  0 & 1 &  0 & 4 & 0 \\
    0 &  0 & 0 &  0 & 0 &  0 & 0 &  0 & 0 & 6 
    \end{bmatrix}.
\end{equation*}
Matrix $A\in\mathbb{R}^{10\times 10}$ is composed of the following rows: 
\begin{align*}
	A_1= & [\kappa_1\quad 0\quad 0\quad 0\quad 0\quad 0\quad 0\quad 0\quad 0\quad 0],\\
	A_2= & [0\quad \kappa_2\quad -\kappa_2\quad 0\quad 0\quad 0\quad 0\quad 0\quad 0\quad 0],\\
	A_3= & [0\quad 0\quad 2\kappa_1\quad 0\quad 0\quad 0\quad 0\quad 0\quad 0\quad 0],\\
    A_4= & [0\quad 0\quad 0\quad 2\kappa_2\quad 0\quad -2\kappa_2\quad 0\quad 0\quad 0\quad 0],\\
    A_5= & [0\quad 0\quad 0\quad 0\quad 3\kappa_2\quad 0\quad 0\quad -3\kappa_2\quad 0\quad 0],\\
    A_6= & [0\quad 0\quad 0\quad 0\quad 0\quad 2\kappa_1+\kappa_2\quad -\kappa_2\quad 0\quad 0\quad 0],\\
    A_7= & [0\quad 0\quad 0\quad 0\quad 0\quad 0\quad 4\kappa_1\quad 0\quad 0\quad 0],\\
    A_8= & [0\quad 0\quad 0\quad 0\quad 0\quad 0\quad 0\quad  2\kappa_1+2\kappa_2 \quad -2\kappa_2 \quad 0],\\
	A_9= & [0\quad 0\quad 0\quad 0\quad 0\quad 0\quad 0\quad 0\quad 4\kappa_1+\kappa_2\quad -\kappa_2],\\
	A_{10}= & [0\quad 0\quad 0\quad 0\quad 0\quad 0\quad 0\quad 0\quad 0\quad 6\kappa_1],
\end{align*}
Finally, matrices $B$ and $C$ are
\begin{equation*}
    B = 
    \begin{bmatrix}
    0 & 1 & 0 & 0 & 0 & 0 & 0 & 0 & 0 & 0
    \end{bmatrix}^T,
\end{equation*}
\begin{equation*}
    C = 
    \begin{bmatrix}
    0 & 1 & 0 & 0 & -\frac{1}{6} & 0 & 0 & 0 & 0 & 0
    \end{bmatrix}.
\end{equation*}
%\begin{equation*}
%    S = \begin{bmatrix} 0 & -\omega \\ \omega & 0 \end{bmatrix}.
%\end{equation*}
%One can verify that the pair (A,B) is stabilizable. This is due to $u$ appearing linearly in the dynamics of the second state and an appropriate choice of observable functions. Furthermore, the pair $\big(\big[\begin{smallmatrix} A & P \\ 0 & S  \end{smallmatrix}\big],[\begin{smallmatrix} C & 0 \end{smallmatrix}]\big)$ is detectable, and hence the matrix $A_c$ can be made Hurwitz with an appropriate choice of the gains $\Gamma$ and $G$.
%%%%%%%%%%%%%%%%%%%%%%%%%%%%%%%%%%%%%%%%%%%%%%%%%%%%%%%%%%%%%%%%%%%%%%%%%%%%%%%%
\bibliographystyle{ieeetr}
\bibliography{biblio}

\end{document}